\title{Some remarks on the bias distribution analysis of discrete-time identification algorithms based on pseudo-linear regressions }}
\author  {Bernard Vau\thanks {  bernard.vau@satie.ens-cachan.fr}, Henri Bourl{\`e}s \\ \small{ SATIE, Ecole normale sup{\'e}rieure Paris-Saclay  94230 Cachan France} }
\begin{document}

\maketitle

\newtheorem{proposition}{Proposition}
\newtheorem{lemma}{Lemma}
\newtheorem{theorem}{Theorem}

\begin{abstract}
In 1998, A. Karimi and I.D. Landau published in the journal "Systems and Control letters" an article
entitled \textquotedblleft Comparison of the closed-loop identification
methods in terms of bias distribution\textquotedblright . One of its main
purposes was to provide a bias distribution analysis in the frequency domain
of closed-loop output error identification algorithms that had been recently
developed. The expressions provided in that paper are only
valid for prediction error identification methods (PEM), not for pseudo-linear
regression (PLR) ones, for which we give the correct frequency domain bias
analysis, both in open- and closed-loop. Although PLR was initially (and is
still) considered as an approximation of PEM, we show that it gives better
results at high frequencies.
 
\end{abstract}



\section{Introduction}

In the field of discrete-time identification, the bias distribution analysis
over frequency domain is very powerful in order to evaluate the influence of
input and noise spectra on an identified model, and to assess qualitatively
the model that can be obtained if it has not the same structure as the
identified system. This method was first introduced in \cite{r1} for
open-loop identification; see \cite{r2} for more details, especially in the
context of closed-loop operations. In these two references, this bias
analysis has been developed in the perspective of prediction error methods
(PEM), which aim at minimizing a one step further prediction
error variance. In \cite{r3} and \cite{r4}, Karimi and Landau used the same
method to infer the bias distribution of closed-loop algorithms. In \cite{r3} (section 5) and \cite{r15} (section 6), it is claimed that this analysis is valid for CLOE, F-CLOE and AF-CLOE algorithms which are of PLR type. However, this is not true as already noticed in \cite{r4}, p.308, although PLR was
initially (and is still) considered as an approximation of PEM (see, e.g.,
remark 8.4.2 of \cite{r14}); rather, PLR algorithms tend to cancel the correlation
function between the prediction error and an observation vector, which in
general is the regressor of the predictor, possibly filtered (see the
correlation approach developed by Ljung in \cite{r2}). If the system dynamics
is approximately known beforehand, the difference between PEM and PLR can be
made quite small, as shown in section 4, with an appropriate regressor
filtering. But this approach presupposes that what one is seeking is already
known, a vicious circle which prevents from addressing the core of the problem.
It is said in \cite{r4} (p. 308) that the bias distribution cannot be computed for CLOE algorithm.
Therefore, bias distribution of PLR algorithms, such as CLOE, is an open problem, which is solved here; this bias distribution is determined and we show that it is quite different from that of PEM. To do this we
introduce in section 3, the concept of equivalent prediction error -- most
of the time a non-measurable signal -- whose variance is effectively
minimized by the PLR algorithm, even if the identified system is not in the
model set. This approach shows that compared to PEM, PLR schemes strongly
penalize the model misfit at high frequency, in a way comparable to the
classical open-loop least-squares algorithm, whatever the predictor model is
(output error, ARMAX, etc.). In section 5, an example is given, that relies
on the Vinnicombe gap, in order to compare the model misfit of a PEM scheme
and its corresponding PLR one: It brings to light the discrepancy between
the two methods in case of a closed-loop output error (CLOE) identification
structure.

\section{Optimal prediction error and bias distribution analysis of PEM
algorithms}

At first, let us recall briefly the model structures used here, both in
open-and closed-loop, and the manner to obtain the bias distribution from
PEM algorithms. According to Landau et al. (\cite{r3}, p. 44), we
distinguish between equation error models and output error models. Equation
error model in open-loop corresponds to the equation: 
\begin{equation*}
A(q^{-1})y(t)=B(q^{-1})u(t)+C(q^{-1})e(t)
\end{equation*}%
with $G(q^{-1})=\frac{B(q^{-1})}{A(q^{-1})}$, and $W(q^{-1})=\frac{C(q^{-1})%
}{A(q^{-1})}$, respectively the deterministic and stochastic parts of the
model; $q^{-1}$ the shift backward operator, $A(q^{-1})$ is a monic
polynomial, $W(q^{-1})$ is the ratio of two monic polynomials, and $%
u(t),y(t),e(t)$ are the input, the output, and a centered gaussian white
noise, respectively.

\noindent According to the noise structure, we distinguish between the following cases:

\begin{itemize}
\item when $C(q^{-1})=1,$ corresponding to the ARX model,

\item when $C(q^{-1})$ is a monic polynomial of degree strictly greater than
0, corresponding to the ARMAX model.
\end{itemize}

\noindent Other equation error models exist (e.g. ARARMAX, etc.) but they are not
treated in this paper. On the other hand, the output error model in
open-loop is given by%
\begin{equation*}
y(t)=G(q^{-1})u(t)+v(t)
\end{equation*}%
where $v(t)$ is a centered gaussian noise not necessarily white, but
uncorrelated with the input.

Let us call $\widehat{G}(q^{-1})$ and $\widehat{W}(q^{-1})$ the estimations
of $G(q^{-1})$ and $W(q^{-1}),$ respectively. In the case of an open-loop
equation error model, the prediction error is given by (\cite{r4}, (3.3)), (%
\cite{r4}, (9.62): 
\begin{equation}
\varepsilon (t)=\widehat{W}(q^{-1})\{(G(q^{-1})-\widehat{G}%
(q^{-1}))u(t)+(W(q^{-1})-\widehat{W}(q^{-1}))e(t)\}+e(t)  \label{e1}
\end{equation}%
whereas the optimal error for the open-loop output error model is simply: 
\begin{equation*}
\varepsilon (t)=(G(q^{-1})-\widehat{G}(q^{-1}))u(t)+v(t)
\end{equation*}%

\noindent The closed-loop case is more complicated, due to the feedback control law.
We assume that the controller has an R-S structure, i.e. $%
S(q^{-1})u(t)=-R(q^{-1})y(t);$ let us define the direct sensitivity function
(transfer function from the output noise to the output): 
\begin{equation*}
S_{yp}(q^{-1})=\frac{A(q^{-1})S(q^{-1})}{%
A(q^{-1})S(q^{-1})+B(q^{-1})R(q^{-1})}
\end{equation*}%

\noindent In the context of an equation error model, in which the model input is given
by $S(q^{-1})\widehat{u}(t)=-R(q^{-1})\widehat{y}(t),$ the optimal predicted
output is 
\begin{equation*}
\widehat{y}(t)=\widehat{G}\widehat{u}(t)+\widehat{\overline{W}}%
(q^{-1})\varepsilon (t)
\end{equation*}%
where $\widehat{\overline{W}}(q^{-1})=\widehat{W}(q^{-1})-\widehat{S}%
_{yp}^{-1}(q^{-1})$ (see \cite{r3}, eq. (5.7) sq.), thus we have: 
\begin{equation*}
\varepsilon (t)=\widehat{W}^{-1}(q^{-1})\{(G-\widehat{G}%
)S_{yp}r_{u}(t)+(WS_{yp}{\widehat{S}}_{yp}^{-1}-\widehat{W})e(t)\}+e(t)
\end{equation*}%
\noindent This expression is directly obtained from (\cite{r3}, (5.12)).

\noindent The optimal predicted output is
given by $\widehat{y}(t)=\widehat{G}\widehat{u}(t),$ and the corresponding
optimal prediction error of output error model by 
\begin{equation*}
\varepsilon (t)=\widehat{S}_{yp}(G-\widehat{G})S_{yp}r_{u}(t)+S_{yp}v(t)
\end{equation*}%
where $\widehat{S}_{yp}=\frac{\widehat{A}S}{\widehat{A}S+\widehat{B}R},$ $%
\widehat{A},\widehat{B}$ being the estimations of $A$ and $B$ respectively
(where the dependance in $q^{-1}$ is omitted).

\noindent The purpose of PEM algorithms is to minimize the prediction error variance $%
\mathbf{E}[\varepsilon ^{2}(t)]$, and whatever the algorithm structure is,
both in open or closed-loop, one obtains the optimal estimated parameter
vector $\widehat{\theta }_{PEM}^{\ast }$: 
\begin{equation}
\widehat{\theta }_{PEM}^{\ast }=Arg\min \mathbf{E}[\varepsilon
^{2}(t)]=Arg\min {\int_{-\pi }^{+\pi }}\left\vert\mathcal{Z}\{ \varepsilon\} (e^{i\omega
})\right\vert ^{2}d\omega  \label{e5}
\end{equation}%
\noindent where $\mathcal{Z}$ is the z-transform. This expression is at the origin of bias analysis for PEM algorithms.

\section{Bias distribution of pseudo-linear regression algorithms\label%
{sect_bias_PLR}}

\subsection{Equivalent prediction error}

The \emph{a posteriori} model predicted output $\widehat{y}(t+1)$ is
provided by $\widehat{y}(t+1)=\widehat{\theta }(t+1)\phi (t,\widehat{\theta }%
)$, where $\phi (t,\widehat{\theta })$ is the regressor structure, generally
depending on $\widehat{\theta }$. The \emph{a posteriori} prediction error
is given by the expression: $\varepsilon (t+1)=y(t+1)-\widehat{y}(t+1,%
\widehat{\theta })$. Most of the PLR identification procedures are solved
recursively with the so-called parameter adaptation algorithm (PAA): 
\begin{subequations}
\begin{align}
\widehat{\theta }(t+1)& =\widehat{\theta }(t)+F(t)\phi (t)\varepsilon (t+1)
\\
F^{-1}(t+1)& =\lambda _{1}F^{-1}(t)+\lambda _{2}\phi (t)\phi ^{T}(t)
\end{align}%
$F(t)$ is the adaptation gain (positive definite matrix), $0<\lambda
_{1}\leq 1,0\leq \lambda _{2}<2$ are forgetting factors. The stationary
condition of the PAA is (see \cite{r2}, p. 224): 
\end{subequations}
\begin{equation}
\mathbf{E}[\varepsilon (t+1)\phi (t)]=0  \label{station_cond}
\end{equation}

\begin{lemma}
In general, the stationarity condition of the parameter adaptation algorithm 
$\mathbf{E} [\varepsilon (t +1) \phi (t ,\widehat{\theta })] =0$, is not the
one associated with the prediction error variance $\mathbf{E} [\varepsilon
^{2} (t)]$ minimization.

\label{prop1}

\begin{proof}
As a counterexample let us consider the extended least squares algorithm
(corresponding to an ARMAX model), for which the predicted output is $%
\widehat{y}(t+1)=\widehat{\theta }^{T}\phi (t)$. Let $\varepsilon (t)=y(t)-%
\widehat{y}(t),$ $\widehat{\theta }^{T}=[%
\begin{array}{cccccc}
\widehat{a}_{1}, & \cdots  & \widehat{b}_{1}, & \cdots  & \widehat{c}_{1}, & 
\cdots 
\end{array}%
]$ and $\phi (t)^{T}=[%
\begin{array}{cccccc}
-y(t), & \cdots  & u(t), & \cdots  & \varepsilon (t), & \cdots 
\end{array}%
]$. Let $\widehat{C}(q^{-1})=1+\widehat{c}_{1}q^{-1}+\widehat{c}%
_{2}q^{-2}\cdots $ and assume that $\widehat{C}(q^{-1})\neq 1.$ One has that 
$\phi (t,\widehat{\theta })=-\widehat{C}(q^{-1},\widehat{\theta })\frac{%
\partial \varepsilon (t+1)}{\partial \widehat{\theta }}\text{.}$ In this
case the stationary condition of the parameter adaptation algorithm is $%
\mathbf{E}[\varepsilon (t+1)\widehat{C}(q^{-1})\frac{\partial \varepsilon
(t+1)}{\partial \widehat{\theta }}]=0$. Therefore $\frac{\partial }{\partial 
\widehat{\theta }}\mathbf{E}\left[ \varepsilon \left( t+1\right) ^{2}\right]
=2\mathbf{E}\left[ \varepsilon \left( t+1\right) \frac{\partial \varepsilon
\left( t+1\right) }{\partial \widehat{\theta }}\right] $ cannot be zero
unless $\left\{ \varepsilon \left( t+1\right) \right\} $ is white, and that
cannot happen if the system is not in the model set.
\end{proof}
\end{lemma}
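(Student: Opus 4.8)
The plan is to exhibit a single, transparent counterexample—the extended least squares (ELS) algorithm associated with an ARMAX model—in which the stationarity condition of the PAA demonstrably differs from the first-order condition for minimizing $\mathbf{E}[\varepsilon^2(t)]$. Since the claim is stated ``in general,'' a counterexample suffices: I need only produce one PLR scheme where the two conditions are distinct, and ELS is the natural choice because its regressor $\phi$ has a well-understood structure and the prediction error depends on the unknown parameters in a way I can differentiate explicitly.

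First I would set up the ELS predictor $\widehat{y}(t+1)=\widehat{\theta}^T\phi(t)$ with the regressor containing past outputs, past inputs, and past prediction errors, and write $\varepsilon(t)=y(t)-\widehat{y}(t)$. The decisive algebraic step is to compute the gradient $\partial\varepsilon(t+1)/\partial\widehat{\theta}$ and to relate it to the regressor $\phi(t,\widehat{\theta})$. The key identity I am aiming for is
\begin{equation*}
\phi(t,\widehat{\theta})=-\widehat{C}(q^{-1},\widehat{\theta})\,\frac{\partial\varepsilon(t+1)}{\partial\widehat{\theta}},
\end{equation*}
which says that the true gradient of the prediction error is the regressor \emph{pre-filtered} by $\widehat{C}(q^{-1})$. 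Establishing this requires differentiating the ARMAX prediction-error recursion with respect to each parameter block and observing that, because $\varepsilon$ enters the regressor through the $\widehat{C}$ polynomial, the sensitivity of $\varepsilon$ picks up exactly the factor $\widehat{C}^{-1}$ relative to $\phi$.

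With that identity in hand, the rest is immediate substitution. The PAA stationarity condition $\mathbf{E}[\varepsilon(t+1)\phi(t)]=0$ becomes $\mathbf{E}[\varepsilon(t+1)\widehat{C}(q^{-1})\,\partial\varepsilon(t+1)/\partial\widehat{\theta}]=0$, whereas the variance-minimization condition is $\frac{\partial}{\partial\widehat{\theta}}\mathbf{E}[\varepsilon(t+1)^2]=2\,\mathbf{E}[\varepsilon(t+1)\,\partial\varepsilon(t+1)/\partial\widehat{\theta}]=0$. The two coincide only if the filtering by $\widehat{C}$ can be absorbed without changing the inner product, which forces $\widehat{C}(q^{-1})=1$ or forces $\{\varepsilon(t+1)\}$ to be white. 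I would close the argument by noting that whiteness of the residual cannot hold when the true system lies outside the model set, so under the standing assumption $\widehat{C}(q^{-1})\neq 1$ the two conditions are genuinely different.

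The main obstacle is the gradient computation yielding the $\widehat{C}$-prefiltering identity: one must be careful that differentiating $\varepsilon(t+1)$ with respect to $\widehat{\theta}$ generates additional terms from the recursive dependence of the past $\varepsilon$'s on $\widehat{\theta}$, and it is precisely the accumulation of these terms that produces the $\widehat{C}(q^{-1})$ operator rather than a naive identity $\phi=-\partial\varepsilon/\partial\widehat{\theta}$. Once this pseudo-regressor relationship is correctly derived, the separation between the stationarity and the minimization conditions follows by inspection, and the appeal to non-whiteness of $\varepsilon$ outside the model set completes the counterexample.
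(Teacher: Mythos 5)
Your proposal is correct and follows essentially the same route as the paper: the ELS/ARMAX counterexample, the pseudo-regressor identity $\phi(t,\widehat{\theta})=-\widehat{C}(q^{-1},\widehat{\theta})\,\partial\varepsilon(t+1)/\partial\widehat{\theta}$, and the observation that the PAA stationarity condition can then agree with the variance-minimization gradient only if $\varepsilon$ is white, which fails outside the model set. The only difference is presentational: you flag the derivation of the $\widehat{C}$-prefiltering identity as the main technical step, whereas the paper simply asserts it (deferring such computations to its appendix remarks).
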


\begin{lemma}
In PLR schemes, the stationarity condition $\mathbf{E}\left[ \varepsilon
\left( t+1\right)\phi \left( t,\hat{\theta} \right) \right] =0$ is the
stationary condition of the variance minimization problem of the signal $%
\varepsilon _{E}(t+1,\widehat{\theta }),$ called the "equivalent prediction
error" in the sequel, and characterized by the following two conditions:

1) If the system is in the model set and if the estimated parameter vector $%
\widehat{\theta }^{*}_{PLR}$ is equal to the true parameters vector $\theta $
$(\theta =\widehat{\theta }^{*}_{PLR})\text{,}$ then one has:

\begin{itemize}
\item {for the equation error model: \newline
$\varepsilon _{E}(t+1,\widehat{\theta }_{PLR}^{\ast })=\varepsilon
_{E}(t+1,\theta )=\varepsilon (t+1,\theta )=e(t+1)$}

\item {for the open-loop output error model:\newline
$\varepsilon _{E}(t+1,\widehat{\theta }_{PLR}^{\ast })=\varepsilon
_{E}(t+1,\theta )=\varepsilon (t+1,\theta )=v(t+1)$}

\item {for the closed-loop output error model: \newline
$\varepsilon _{E}(t+1,\widehat{\theta }_{PLR}^{\ast })=\varepsilon
_{E}(t+1,\theta )=\varepsilon (t+1,\theta )=S_{yp}v(t+1)$ }
\end{itemize}

2) The vector $\phi _{E}(t)=-\frac{\partial \varepsilon _{E}(t+1)}{\partial 
\widehat{\theta }},$ called the "equivalent regressor", is not a function of 
$\widehat{\theta }\text{,}$ i.e $\frac{\partial \phi _{E}(t)}{\partial 
\widehat{\theta }}=0.$
\label{prop2}
\begin{proof}
By 2),$\frac{\partial \phi _{E}(t)}{\partial \widehat{\theta }}=0\text{,}$
so that $\varepsilon _{E}(t+1)=-\widehat{\theta }^{T}\phi _{E}(t)+k$ ($k$
independant of $\widehat{\theta }$). By 1) we get:\newline
$e(t+1)=-\theta ^{T}\phi _{E}(t)+k$ for the equation error model, \newline
$v(t+1)=-\theta ^{T}\phi _{E}(t)+k$ for the open-loop output error model,%
\newline
$S_{yp}v(t+1)=-\theta ^{T}\phi _{E}(t)+k$ for the closed-loop output error
model.\newline
Combining the preceding equations, one gets the following prediction error
expressions:\newline
$\varepsilon _{E}(t+1)=(\theta -\widehat{\theta })^{T}\phi _{E}(t)+e(t+1)$
for the equation error model, \newline
$\varepsilon _{E}(t+1)=(\theta -\widehat{\theta })^{T}\phi _{E}(t)+v(t+1)$
for the open-loop output error model, \newline
$\varepsilon _{E}(t+1)=(\theta -\widehat{\theta })^{T}\phi
_{E}(t)+S_{yp}v(t+1)$ for the closed-loop output error model. \newline
The stationarity condition applied to these equations is $\mathbf{E}\left[
\varepsilon _{E}\left( t+1\right) \phi _{E}\left( t\right) \right] =0.$
Since $\phi _{E}(t)=-\frac{\partial \varepsilon _{E}(t+1)}{\partial \widehat{%
\theta }},$ this stationnarity condition can be rewritten $\mathbf{E}%
[\varepsilon _{E}(t+1)\frac{\partial \varepsilon _{E}}{\partial \widehat{%
\theta }}]=0\text{,}$ which is the gradient of $\mathbf{E}[\varepsilon
_{E}^{2}(t+1)]$ with respect to $\widehat{\theta }$.
\end{proof}
\end{lemma}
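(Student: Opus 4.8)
The plan is to read conditions 1) and 2) as an implicit characterization of the equivalent prediction error, to recover its explicit affine form by integrating in $\widehat{\theta}$, and then to show that the first-order optimality condition of $\mathbf{E}[\varepsilon_E^2(t+1)]$ is the same correlation equation that the PAA drives to zero. I would treat the three model classes (equation error, open-loop and closed-loop output error) in parallel, since conditions 1) and 2) have identical structure in each.

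First I would exploit condition 2). Because the equivalent regressor $\phi_E(t)=-\partial\varepsilon_E(t+1)/\partial\widehat{\theta}$ is independent of $\widehat{\theta}$, the map $\widehat{\theta}\mapsto\varepsilon_E(t+1,\widehat{\theta})$ has constant gradient and is therefore affine, $\varepsilon_E(t+1)=-\widehat{\theta}^{T}\phi_E(t)+k$, with $k$ gathering every term free of $\widehat{\theta}$ (in particular the driving noise). The role of condition 2) is precisely that $\partial\phi_E/\partial\widehat{\theta}=0$, which later prevents a spurious $\mathbf{E}[\varepsilon_E\,\partial\phi_E/\partial\widehat{\theta}]$ term from appearing in the gradient of the cost.

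Next I would fix $k$ from condition 1). Evaluating the affine form at $\widehat{\theta}=\theta$, where $\varepsilon_E(t+1,\theta)$ collapses to $e(t+1)$, $v(t+1)$ or $S_{yp}v(t+1)$ according to the model, gives $k=\varepsilon_E(t+1,\theta)+\theta^{T}\phi_E(t)$; eliminating $k$ yields $\varepsilon_E(t+1)=(\theta-\widehat{\theta})^{T}\phi_E(t)+\varepsilon_E(t+1,\theta)$, i.e. the three announced expressions. I would then differentiate the cost: since $\phi_E$ does not depend on $\widehat{\theta}$, one has $\frac{\partial}{\partial\widehat{\theta}}\mathbf{E}[\varepsilon_E^2(t+1)]=2\mathbf{E}\!\left[\varepsilon_E(t+1)\frac{\partial\varepsilon_E(t+1)}{\partial\widehat{\theta}}\right]=-2\mathbf{E}[\varepsilon_E(t+1)\phi_E(t)]$, so that the stationary point of the variance is characterized exactly by $\mathbf{E}[\varepsilon_E(t+1)\phi_E(t)]=0$.

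The hard part will be the remaining identification: arguing that this normal equation $\mathbf{E}[\varepsilon_E\phi_E]=0$ is the very condition $\mathbf{E}[\varepsilon(t+1)\phi(t,\widehat{\theta})]=0$ enforced by the PAA. These are not automatically the same, because $\phi$ is the genuine observation vector — $\widehat{\theta}$-dependent and, as Lemma~\ref{prop1} shows, typically a \emph{filtered} gradient of $\varepsilon$ — whereas $\phi_E$ is constrained to be $\widehat{\theta}$-free. I would close this gap structure by structure: for the equation-error, open-loop and closed-loop output-error cases I would exhibit $\varepsilon_E$ and $\phi_E$ explicitly, verify that they satisfy conditions 1) and 2), and check that the correlation $\mathbf{E}[\varepsilon_E\phi_E]$ vanishes exactly when $\mathbf{E}[\varepsilon\phi]$ does. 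This model-by-model verification, rather than the calculus above, is where the real content and the bookkeeping concentrate; the delicate point is handling the $\widehat{\theta}$-dependence of $\phi$ (for instance the $\widehat{C}$ filtering of the ARMAX case) so that it does not perturb the stationary condition.
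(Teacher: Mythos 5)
Your calculational core is exactly the paper's proof: condition 2) gives the affine form $\varepsilon _{E}(t+1)=-\widehat{\theta }^{T}\phi _{E}(t)+k$, condition 1) evaluated at $\widehat{\theta }=\theta $ fixes $k$ and yields $\varepsilon _{E}(t+1)=(\theta -\widehat{\theta })^{T}\phi _{E}(t)+e(t+1)$ (resp.\ $v(t+1)$, $S_{yp}v(t+1)$), and since $\phi _{E}$ is $\widehat{\theta }$-free the gradient of $\mathbf{E}[\varepsilon _{E}^{2}(t+1)]$ is $-2\,\mathbf{E}[\varepsilon _{E}(t+1)\phi _{E}(t)]$, so stationarity of the variance is $\mathbf{E}[\varepsilon _{E}(t+1)\phi _{E}(t)]=0$. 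Up to this point you and the paper coincide step for step.

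The divergence is in what you call the hard part, and you should know that the paper never does it. Its proof ends by declaring that ``the stationarity condition applied to these equations is $\mathbf{E}[\varepsilon _{E}(t+1)\phi _{E}(t)]=0$'', i.e.\ it transports the correlation structure of \eqref{station_cond} onto the pair $(\varepsilon _{E},\phi _{E})$ by analogy and identifies that with the variance gradient; the equality of the solution sets of $\mathbf{E}[\varepsilon (t+1)\phi (t,\widehat{\theta })]=0$ and $\mathbf{E}[\varepsilon _{E}(t+1)\phi _{E}(t)]=0$ is asserted, not argued. The model-by-model construction you envisage is essentially what Theorem~\ref{theorem1} and Table~\ref{table2} later provide, but even there only conditions 1) and 2) are verified for the explicit $\varepsilon _{E}=Q\varepsilon +(1-Q)e$, using the relation $Q\,\partial \varepsilon (t+1)/\partial \widehat{\theta }=-\phi (t)$; the correlation equivalence itself is again left implicit. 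And your instinct that it is nontrivial is right: for open-loop ARMAX, for instance, one finds $\phi _{E}(t)=[-y(t),\dots ,u(t),\dots ,e(t),\dots ]$, i.e.\ $\phi (t)$ with its $\varepsilon $-entries replaced by $e$-entries, while $\varepsilon _{E}=\widehat{C}\varepsilon +(1-\widehat{C})e$, so the two sets of normal equations involve different signals at different lags and are not term-by-term identical; making them match requires stationarity and spectral arguments the paper does not supply. So your proposal reproduces the paper's proof and, beyond it, correctly isolates the step the paper glosses over --- but be aware that filling that step is genuine additional work for which the paper offers no template, not routine bookkeeping, and as written your attempt is no more complete on that point than the paper itself.
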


\begin{lemma}
One has $\widehat{\theta }_{PLR}^{\ast }=Arg\min {\int_{-\pi }^{\pi }}%
|\mathcal{Z}\{\varepsilon _{E}\}(e^{i\omega })\|^{2}d\omega \text{.}$

\begin{proof}
The minimization problem of $\mathbf{E} [\varepsilon _{E}^{2} (t)]$ is
convex since $\varepsilon_{E}\left(t\right)$ is expressed linearly in
function of $\phi_{E}\left(t\right)$, as shown by the proof of lemma 2.
\end{proof}
\end{lemma}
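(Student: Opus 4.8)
The plan is to combine the affine structure of the equivalent prediction error established in Lemma~\ref{prop2} with a convexity argument and Parseval's identity. From the proof of Lemma~\ref{prop2}, in each of the three model cases the equivalent prediction error admits a representation of the form $\varepsilon_E(t+1)=(\theta-\widehat{\theta})^T\phi_E(t)+n(t+1)$, where $n(t+1)$ denotes $e(t+1)$, $v(t+1)$, or $S_{yp}v(t+1)$ as appropriate. The two features that make this usable are that, by condition 2) of Lemma~\ref{prop2}, the equivalent regressor $\phi_E(t)$ does not depend on $\widehat{\theta}$, and that the noise term $n(t+1)$ is likewise independent of $\widehat{\theta}$. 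Hence $\varepsilon_E(t+1)$ is an \emph{affine} function of $\widehat{\theta}$.

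First I would form the cost $J(\widehat{\theta})=\mathbf{E}[\varepsilon_E^2(t+1)]$ and expand it using this representation:
\begin{equation*}
J(\widehat{\theta})=(\theta-\widehat{\theta})^T\mathbf{E}[\phi_E\phi_E^T](\theta-\widehat{\theta})+2(\theta-\widehat{\theta})^T\mathbf{E}[\phi_E\,n]+\mathbf{E}[n^2].
\end{equation*}
This is a quadratic form in $\widehat{\theta}$ whose Hessian is the Gram matrix $2\,\mathbf{E}[\phi_E(t)\phi_E^T(t)]$, which is positive semidefinite, and positive definite under persistent excitation of $\phi_E$. Therefore $J$ is convex, so every stationary point is a global minimizer. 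The stationarity condition $\mathbf{E}[\varepsilon_E(t+1)\phi_E(t)]=0$, which by Lemma~\ref{prop2} coincides both with $\nabla_{\widehat{\theta}}J=0$ and with the PAA stationarity condition \eqref{station_cond}, thus identifies $\widehat{\theta}_{PLR}^{*}$ as the \emph{global} minimizer of $J$ and not merely as a critical point.

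Finally I would pass to the frequency domain. By Parseval's identity the time-domain variance equals, up to the normalizing constant already absorbed in \eqref{e5}, the integral of the squared z-transform magnitude, so that
\begin{equation*}
\mathbf{E}[\varepsilon_E^2(t+1)]=\int_{-\pi}^{\pi}\left|\mathcal{Z}\{\varepsilon_E\}(e^{i\omega})\right|^2 d\omega,
\end{equation*}
exactly as \eqref{e5} does for the genuine prediction error of PEM. Combining this with the preceding paragraph yields the claimed characterization of $\widehat{\theta}_{PLR}^{*}$. The step I expect to be the crux is the convexity argument, since it is what upgrades the PAA stationarity condition from a first-order condition into a global minimization statement; it hinges entirely on $\phi_E$ being independent of $\widehat{\theta}$, i.e.\ on condition 2) of Lemma~\ref{prop2}. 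By contrast the Parseval step is routine, being the very correspondence already invoked to obtain \eqref{e5}.
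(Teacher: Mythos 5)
Your proof is correct and follows essentially the same route as the paper's: the paper's one-line argument is precisely that convexity of $\mathbf{E}[\varepsilon_E^2(t)]$ follows from the affine dependence of $\varepsilon_E$ on $\widehat{\theta}$ established in the proof of Lemma~\ref{prop2}, with the Parseval passage to the frequency domain left implicit as in \eqref{e5}. You have simply made explicit the quadratic expansion, the positive-semidefinite Hessian, and the upgrade of the stationarity condition to a global minimum, all of which the paper compresses into a single sentence.
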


In this paper, we assume that the dependence of $\widehat{y}(t+1)$ with
respect to $\widehat{\theta }$ can be expressed uniquely \textit{via} $\varepsilon (t,%
\widehat{\theta })$, which is the case of the ARMAX and output error
predictors (in open- and closed-loop operations), as shown in Table \ref%
{table1}. Under this assumption, we can write: $\frac{\partial \varepsilon
(t+1)}{\partial \widehat{\theta }}=\frac{-\partial \widehat{y}(t+1)}{%
\partial \widehat{\theta }}=-\phi (t)-\widehat{\theta }^{T}\frac{\partial
\phi }{\partial (q\varepsilon )}\frac{\partial \varepsilon (t+1)}{\partial 
\widehat{\theta }}\text{.}$ Let us define $Q(q^{-1},\widehat{\theta })=1+%
\widehat{\theta }^{T}\frac{\partial \phi }{\partial (q\varepsilon )},$ so
that $Q(q^{-1},\widehat{\theta })\frac{\partial \varepsilon (t+1)}{\partial 
\widehat{\theta }}=-\phi (t)\text{.}$ The computation of $\frac{\partial
\phi }{\partial (q\varepsilon )}$ can be performed using the remarks in
 \ref{appendix_A}. We can now state the main result of this section.

\begin{theorem}
\label{theorem1}

Let $Q (q^{ -1} ,\widehat{\theta }) =1 +\widehat{\theta }^{T} \frac{ \partial \phi}{ \partial (q \varepsilon)}$. Assuming that  $\frac{ \partial ^{2}\phi}{ \partial (q \varepsilon) \partial \widehat{\theta }} =0$,
the equivalent prediction error defined in lemma \ref{prop2} is such that: 

\begin{itemize}
 \item for the equation error model: \\
 $ \varepsilon _{E} (t +1) =Q (q^{ -1} ,\widehat{\theta }) \varepsilon  (t +1)+\left(1 -Q (q^{ -1} ,\widehat{\theta })\right)e(t +1)$ 
  
\item for the open-loop output error model:\\
$\varepsilon _{E} (t +1) =Q (q^{ -1} ,\widehat{\theta }) \varepsilon  (t +1) +\left(1-Q (q^{ -1} ,\widehat{\theta })\right)v(t +1)$ 

\item for the closed-loop output error model: \\
$\varepsilon _{E} (t +1) =Q (q^{ -1} ,\widehat{\theta }) \varepsilon  (t +1)+\left(1 -Q (q^{ -1} ,\widehat{\theta }) \right)S_{yp}(q^{-1})v (t +1)$ 
 
\end{itemize}

\label{prop4}

\end{theorem}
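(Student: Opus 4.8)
The plan is to verify that the three displayed expressions satisfy the two conditions of Lemma \ref{prop2} that characterize the equivalent prediction error. Since the three cases differ only through the additive noise term --- $e(t+1)$, $v(t+1)$, or $S_{yp}(q^{-1})v(t+1)$ --- which is exactly the value $\varepsilon(t+1,\theta)$ takes in each setting, I will treat them at once by writing $\varepsilon_E(t+1)=Q\varepsilon(t+1)+(1-Q)\,n(t+1)$ with $n$ the relevant noise. Two ingredients will do the work: the identity $Q\frac{\partial\varepsilon(t+1)}{\partial\widehat\theta}=-\phi(t)$ derived just before the statement, and the hypothesis $\frac{\partial^{2}\phi}{\partial(q\varepsilon)\partial\widehat\theta}=0$, which makes $Q=1+\widehat\theta^{T}\frac{\partial\phi}{\partial(q\varepsilon)}$ affine in $\widehat\theta$ with a $\widehat\theta$-independent gradient $\frac{\partial Q}{\partial\widehat\theta}=\frac{\partial\phi}{\partial(q\varepsilon)}$.

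Condition 1 is immediate: evaluating at $\widehat\theta=\theta$ and using $\varepsilon(t+1,\theta)=n(t+1)$ from the first item of Lemma \ref{prop2}, every $\varepsilon$ becomes $n$, so $\varepsilon_E(t+1,\theta)=Q\,n+(1-Q)\,n=n(t+1)$ whatever the value of $Q(q^{-1},\theta)$.

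The real work is condition 2, that $\phi_E=-\frac{\partial\varepsilon_E}{\partial\widehat\theta}$ be free of $\widehat\theta$. Differentiating $\varepsilon_E=Q(\varepsilon-n)+n$ and using $\frac{\partial n}{\partial\widehat\theta}=0$ gives $\frac{\partial\varepsilon_E}{\partial\widehat\theta}=\frac{\partial Q}{\partial\widehat\theta}(\varepsilon-n)+Q\frac{\partial\varepsilon}{\partial\widehat\theta}$, into which I substitute $Q\frac{\partial\varepsilon}{\partial\widehat\theta}=-\phi$ to get $\phi_E=\phi-\frac{\partial Q}{\partial\widehat\theta}(\varepsilon-n)$. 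The crux is to show that $\frac{\partial Q}{\partial\widehat\theta}\,\varepsilon=\frac{\partial\phi}{\partial(q\varepsilon)}\,\varepsilon$ reproduces exactly the $\varepsilon$-dependent entries of $\phi$; these then cancel against $+\phi$, and what survives is the $\widehat\theta$-independent part of $\phi$ together with $\frac{\partial Q}{\partial\widehat\theta}\,n$, which is manifestly $\widehat\theta$-free. This reconstruction is legitimate precisely because the predictor depends on the shifted error signal linearly and with coefficients frozen in $\widehat\theta$ --- the content of $\frac{\partial^{2}\phi}{\partial(q\varepsilon)\partial\widehat\theta}=0$ --- so that the homogeneity (Euler) identity for the degree-one dependence on $q\varepsilon$ applies. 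The $\phi_E$ obtained this way is simply the original regressor with every prediction-error entry $\varepsilon$ replaced by the noise $n$, i.e. the expected equivalent regressor.

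I expect the timing and shift bookkeeping in this cancellation to be the main obstacle: one must track the backward shifts $q^{-i}$ carried by $\frac{\partial\phi}{\partial(q\varepsilon)}$ and check, after using $Q\frac{\partial\varepsilon}{\partial\widehat\theta}=-\phi$, that the $\varepsilon$-terms produced by differentiating the coefficients of $Q$ match the error entries of $\phi$ slot by slot. I would confirm this on the extended least squares (ARMAX) case, where $Q=\widehat C(q^{-1})$ and the error slots of $\phi_E$ collapse from $\varepsilon(t+1-i)$ to $e(t+1-i)$. Once condition 2 holds, both characterizing properties of Lemma \ref{prop2} are met, so the displayed $\varepsilon_E$ is indeed the equivalent prediction error; the closed-loop output-error formula requires no further argument beyond taking $n=S_{yp}(q^{-1})v$.
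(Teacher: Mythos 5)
Your proposal is correct and follows essentially the same route as the paper's proof: both verify the two characterizing conditions of Lemma \ref{prop2}, compute $\frac{\partial \varepsilon_E(t+1)}{\partial \widehat{\theta}} = -\phi(t) + \frac{\partial Q}{\partial \widehat{\theta}}\left(\varepsilon(t+1)-n(t+1)\right)$ using $Q\frac{\partial \varepsilon(t+1)}{\partial \widehat{\theta}}=-\phi(t)$, and invoke the hypothesis $\frac{\partial^2 \phi}{\partial (q\varepsilon)\,\partial \widehat{\theta}}=0$ to replace $\frac{\partial Q}{\partial \widehat{\theta}}$ by $\frac{\partial \phi}{\partial (q\varepsilon)}$. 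The only difference is cosmetic and lies in the last step: the paper concludes by differentiating $\phi_E$ once more and cancelling $\frac{\partial \phi}{\partial \widehat{\theta}}$ against $\frac{\partial \phi}{\partial (q\varepsilon)}\frac{\partial \varepsilon(t+1)}{\partial \widehat{\theta}}$ via the chain rule, whereas you exhibit $\phi_E$ explicitly as the $\widehat{\theta}$-free noise-substituted regressor using the affine dependence of $\phi$ on $q\varepsilon$ --- the same structural fact in a more constructive guise.
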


\begin{proof}
One has to verify Conditions 1) and 2) of Lemma 2.
1): If the system is in the model
set and $\theta  =\widehat{\theta }$, it is immediate to check that: \\
 $\varepsilon _{E} (t +1) =e (t +1)$ for the equation error model,\\
 $\varepsilon _{E} (t +1) =v (t +1)$ for open-loop output error model,\\
$\varepsilon _{E} (t +1) =S_{yp}(q^{-1})v (t +1)$ for closed-loop output error model.\\
2): One has to verify that $\phi _{E} (t) =\frac{ - \partial \varepsilon _{E} (t +1)}{ \partial \widehat{\theta }}$
 is not a function of $\widehat{\theta }$,  i.e. $\frac{ \partial \phi _{E} (t)}{ \partial \widehat{\theta }} =0$. 
For this purpose let us compute: \\
$\frac{ \partial \varepsilon _{E} (t +1)}{ \partial \widehat{\theta }} =Q \frac{ \partial \varepsilon  (t +1)}{ \partial \widehat{\theta }} +\frac{ \partial Q}{ \partial \widehat{\theta }} (\varepsilon  (t +1) -e (t +1)) = -\phi  (t) +\frac{ \partial Q}{ \partial \widehat{\theta }} (\varepsilon  (t +1) -e (t +1))$
 for the equation error model.  But $\frac{ \partial Q}{ \partial \widehat{\theta }} =\frac{ \partial \phi}{ \partial(q \varepsilon)} +\widehat{\theta }^{T} \frac{ \partial ^{2}\phi}{ \partial (q\varepsilon)  \partial \widehat{\theta }}\text{.}$ 
Under the assumption $\frac{ \partial ^{2}\phi}{ \partial (q\varepsilon)  \partial \widehat{\theta }} =0\text{,}$
 we obtain $\frac{ \partial Q}{ \partial \widehat{\theta }} =\frac{ \partial \phi}{ \partial(q \varepsilon)}$. 
As $\phi _{E} (t) =\frac{ - \partial \varepsilon _{E} (t +1)}{ \partial \widehat{\theta }} ,$ one gets $\frac{ \partial \phi _{E} (t)}{ \partial \widehat{\theta }} = -\frac{ \partial \phi  (t)}{ \partial \widehat{\theta }} +\frac{ \partial \phi  (t)}{ \partial \widehat{\theta }} =0.$ 
The same result holds for output error models (both in open- and closed-loop), substituting $e(t+1)$ by $v(t+1)$ and by $S_{yp}(q^{-1})v(t+1)$ respectively.

\end{proof}

\subsection{Expression of the bias distribution of PLR algorithms}

For each noise model, either in open- or closed-loop operations, the
corresponding PLR algorithm has a specific predictor. We give in Table \ref%
{table1} these expressions.



\begin{center}
\begin{table}[h]
\resizebox{1\textwidth}{!}{\begin{minipage}{\textwidth}

\renewcommand*{\arraystretch}{1}

\begin{tabular}{|>{\centering\scriptsize\arraybackslash}p{5pt}|>{\centering\scriptsize\arraybackslash}p{40pt}|>{\centering\scriptsize\arraybackslash}p{285pt}|}

		\hline

		\multicolumn{2}{|>{\centering\scriptsize}p{45pt}|} {NOISE MODEL} &  {PREDICTED OUTPUT $\widehat{y}(t+1)$ and REGRESSOR $\phi(t)$} \\[5pt]

			\hline
		   \multirow{8}*{\rotatebox[origin=c]{90}{\centerline{  Open-loop}}} 
			&  \multirow{2}*{{ARX}} &
			{$\widehat{y} (t +1) = -\widehat{a}_{1} y (t) -\widehat{a}_{2} y (t -1) +\cdots  +\widehat{b}_{1} u (t) +\widehat{b}_{2} u (t -1) +		 \cdots $}\\  \cline{3-3}
			&  & {$\phi(t)=[-y(t), -y(t-1), \cdots u(t), u(t-1), \cdots ]$}\\ \cline{2-3}
			
			  & \multirow{2}*{{ARMAX}} &
				{ $\widehat{y} (t +1) = -\widehat{a}_{1} y(t) -\widehat{a}_{2} y(t -1) +\cdots  +\widehat{b}_{1} u (t) +\widehat{b}_{2} u (t -1) +\cdots                                         +   	\widehat{c}_{1} \varepsilon  (t) +\widehat{c}_{2} \varepsilon  (t -1) +\cdots $ } \\ \cline{3-3}
		  &  &{$\phi(t)=[-y(t), -y(t-1), \cdots u(t), u(t-1), \cdots  \varepsilon(t), \varepsilon(t-1), \cdots ]$}\\   \cline{2-3}
		
			  & OUTPUT &
				{ $\widehat{y} (t +1) = -\widehat{a}_{1} \widehat{y} (t) -\widehat{a}_{2} \widehat{y} (t -1) +\cdots  +\widehat{b}_{1} u (t) +\widehat{b}_{2} u (t -1) + \cdots $}\\ \cline{3-3}
				 & ERROR & {$\phi(t)=[-\widehat{y}(t), -\widehat{y}(t-1), \cdots u(t), u(t-1), \cdots ]$}\\ \cline{2-3}
			\hline

			\multirow{6}*{\rotatebox[origin=l]{90}{\centerline{   Closed-loop}}} 
			
			 & \multirow{2}*{{ARMAX}} & { $\widehat{y} (t +1) = -\widehat{a}_{1} \widehat{y} (t) -\widehat{a}_{2} \widehat{y} (t -1) +\cdots  +\widehat{b}_{1} \widehat{u} (t) +\widehat{b}_{2} \widehat{u} (t -1) +\cdots  +\widehat{h}_{1} \varepsilon  (t) +\widehat{h}_{2} \varepsilon  (t -1) + \cdots $ }\\
			\cline{3-3}
			& &  {$\phi(t)=[-\widehat{y}(t), -\widehat{y}(t-1), \cdots \widehat{u}(t), \widehat{u}(t-1), \cdots  \varepsilon(t), \varepsilon(t-1), \cdots ]$}\\   \cline{2-3}
			& OUTPUT & {  $\widehat{y} (t +1) = -\widehat{a}_{1} \widehat{y} (t) -\widehat{a}_{2} \widehat{y} (t -1) + \cdots  +    		    					\widehat{b}_{1} \widehat{u} (t) +\widehat{b}_{2} \widehat{u }(t -1) + \cdots $} \\ \cline{3-3}
			 & ERROR & {$\phi(t)=[-\widehat{y}(t), -\widehat{y}(t-1), \cdots \widehat{u}(t), \widehat{u}(t-1), \cdots ]$}\\ \cline{2-3}
			
			\hline

\end{tabular}

\caption{Predicted output in open- and closed-loop operations}
\label{table1}

\end{minipage}}
\end{table}
\end{center}

\noindent (Table \ref{table2} provides the expressions of $Q(q^{-1},\widehat{\theta })$
and $\varepsilon _{E}(t+1)$ for all predictor models presented above where $%
\widehat{P}=\widehat{A}S+\widehat{B}R$).

\begin{center}
\begin{table}[h]
\resizebox{1\textwidth}{!}{\begin{minipage}{\textwidth}
\renewcommand*{\arraystretch}{1}

\begin{tabular}{|>{\centering\scriptsize\arraybackslash}p{40pt}|>{\centering\scriptsize\arraybackslash}p{40pt}|>{\centering\scriptsize\arraybackslash}p{40pt}|>{\centering\scriptsize\arraybackslash}p{200pt}|}

\hline

 \multicolumn{2}  {|>{\centering\scriptsize}p{45pt}|}{NOISE MODEL} & {\scriptsize ${Q} (\widehat{\theta })$} & {\scriptsize $\varepsilon _{E} (t)$}  \\
\hline
 
\multirow{4}*{\rotatebox[origin=c]{0}{\centerline{Open-\ loop}}} 
&  ARX &  $1$ & {$\widehat{A} (G -\widehat{G}) u (t) +\widehat{A} (W -\frac{1}{\widehat{A}}) e (t) +e (t)$}\\ \cline{2-4}

&  ARMAX &  $\widehat{C}(q^{-1},\widehat\theta)$ &  {$\widehat{A} (G -\widehat{G}) u (t) +\widehat{A} (W -\frac{\widehat{C}}{\widehat{A}}) e (t) +e (t)$}\\ \cline{2-4}
&  OUTPUT ERROR  & $\widehat{A}(q^{-1},\widehat\theta)$& $\widehat{A} (G -\widehat{G}) u (t) +v (t)$\\  \hline

\multirow{3}*{\rotatebox[origin=c]{0}{\centerline{Closed-\ loop  }}} 
& ARMAX &  $\widehat{C}(q^{-1},\widehat\theta) $ & $\widehat{A} \ (G -\widehat{G}) S_{yp}r_{u} (t) +\widehat{A} \left( WS_{yp}\widehat{S}_{yp}^{-1}-\frac{\widehat{C}}{\widehat{A}}\right)e(t) +e (t)$ \\ \cline{2-4}

 &  OUTPUT ERROR  & $\frac{\widehat{P}(q^{-1},\widehat\theta)}{S(q^{-1})}$ & $\widehat{A} (G -\widehat{G})S_{yp}(q^{-1})r_{u}(t) +S_{yp}(q^{-1})v (t)$\\  \hline

\end{tabular}

\caption{Expression of the equivalent prediction error (open-loop and closed-loop)}

\label{table2}

\end{minipage}}
\end{table}
\end{center}

\noindent These expressions differ from those of PEM that can be computed from (\ref{e5}). In particular one can notice that for ARMAX and OE models, both in open- and closed-loop, the deterministic part of $\varepsilon _{E}(t)$ is weighted by $\widehat{A}$ which is the same weighting function as for the ARX model. This function strongly penalizes the high frequency misfit. Table \ref{table3} gives the expressions of the asymptotic values of the estimated parameters ${\widehat{\theta }}_{PLR}^{\ast }$ in the frequency domain. 
The expressions ${{\Phi}}_{uu}(\omega)$, ${{\Phi}}_{r_{u}r_u}(\omega)$, ${{\Phi}}_{ee}(\omega)$, correspond to the spectra of $u$, $r_{u}$ (the additive excitation on the input in closed-loop), and $e$ respectively.

\begin{center}
\begin{table}[h]
\resizebox{1\textwidth}{!}{\begin{minipage}{\textwidth}
\renewcommand*{\arraystretch}{1}
\begin{tabular}{|>{\centering\scriptsize\arraybackslash}p{5pt}|>{\centering\scriptsize\arraybackslash}p{70pt}|>{\centering\scriptsize\arraybackslash}p{260pt}|}

\hline

 \multicolumn{2}  {|>{\centering\scriptsize}p{75pt}|}{NOISE MODEL} &  {${\widehat{\theta}}_{PLR}^*$}  \\
\hline
 
\multirow{4}*{\rotatebox[origin=c]{90}{\rightline{Open-loop}}} 
&  ARX & { $ Argmin \displaystyle\int\nolimits_{-\pi}^{+\pi}  \left\{\left| \widehat{A} (e^{i\omega}) \right |^2\left(   \left|G(e^{i\omega})- \widehat{G} (e^{i\omega}) \right|^2 {\Phi}_{uu} (\omega) +\right. \right. \newline \left.\left. \cdots \left|W(e^{i\omega})-\frac{1}{\widehat{A}(e^{i\omega})} \right|^2 {\Phi}_{ee}(\omega) \right)\right\} \mathrm d\omega $ }\\ \cline{2-3}

&  ARMAX & { $ Argmin \displaystyle\int\nolimits_{-\pi}^{+\pi}  \left\{\left| \widehat{A} (e^{i\omega}) \right |^2 \left( \left|G(e^{i\omega})- \widehat{G} (e^{i\omega}) \right|^2 {\Phi}_{uu} (\omega)+\right.\right. \newline \cdots  \left.\left. \left|W(e^{i\omega})-\frac{\widehat{C}}{\widehat{A}(e^{i\omega})} \right|^2 {\Phi}_{ee}(\omega)\right)\right\} \mathrm d\omega $ }\\ \cline{2-3}

&  OUTPUT ERROR & { $ Argmin \displaystyle\int\nolimits_{-\pi}^{+\pi}  \left| \widehat{A} (e^{i\omega}) \right |^2  \left|G(e^{i\omega})- \widehat{G} (e^{i\omega}) \right|^2 {\Phi}_{uu} (\omega)  \mathrm d\omega $ }\\ 
\hline

\multirow{3}*{\rotatebox[origin=c]{90}{\rightline{Closed-loop}}} 
&  ARMAX &
 { $ Argmin \displaystyle\int\nolimits_{-\pi}^{+\pi}  \left\{\left| \widehat{A} (e^{i\omega}) \right |^2 \left( \left|G(e^{i\omega})- \widehat{G} (e^{i\omega}) \right|^2 \left|S_{yp}(e^{i\omega}) \right|^2 {\Phi}_{r_ur_u} (\omega) +\right.\right. \newline  \left.\left. \cdots  \left|W(e^{i\omega})\frac{S_{yp}(e^{i\omega})}{\widehat{S}_{yp}^{-1}(e^{i\omega})}-\frac{\widehat{C}(e^{i\omega})}{\widehat{A}(e^{i\omega})} \right|^2  {\Phi}_{ee}(\omega) \right)\right\} \mathrm d\omega $ }\\ \cline{2-3}

& OUTPUT ERROR &{ $ Argmin \displaystyle\int\nolimits_{-\pi}^{+\pi}  \left| \widehat{A} (e^{i\omega})\right |^2  \left|G(e^{i\omega})- \widehat{G} (e^{i\omega}) \right|^2 \left|S_{yp}(e^{i\omega}) \right|^2 {\Phi}_{r_ur_u} (\omega) \mathrm{d} \omega$}\\

\hline

\end{tabular}

\caption{Bias distribution for PLR algorithms in open- and closed-loop}

\label{table3}

\end{minipage}}
\end{table}
\end{center}

\noindent By comparison, we recall in Table \ref{table3b} the expressions for the estimated parameters in case of PEM methods.
The expressions corresponding to ARMAX and output error models  result directly from (4.186) and (4.193) of \cite{r4} in an open-loop context, and from (9.81) and (9.79) of the same book in closed-loop operation.

\begin{center}
\begin{table}[h]
\resizebox{1\textwidth}{!}{\begin{minipage}{\textwidth}
\renewcommand*{\arraystretch}{1}
\begin{tabular}{|>{\centering\scriptsize\arraybackslash}p{5pt}|>{\centering\scriptsize\arraybackslash}p{70pt}|>{\centering\scriptsize\arraybackslash}p{260pt}|}

\hline

 \multicolumn{2}  {|>{\centering\scriptsize}p{75pt}|}{NOISE MODEL} &  {${\widehat{\theta}}_{PEM}^*$}  \\
\hline
 
\multirow{4}*{\rotatebox[origin=c]{90}{\rightline{Open-loop}}} 
&  ARX & { $ Argmin \displaystyle\int\nolimits_{-\pi}^{+\pi}  \left\{\left| \widehat{A} (e^{i\omega}) \right |^2\left(   \left|G(e^{i\omega})- \widehat{G} (e^{i\omega}) \right|^2 {\Phi}_{uu} (\omega) +\right. \right. \newline \left.\left. \cdots \left|W(e^{i\omega})-\frac{1}{\widehat{A}(e^{i\omega})} \right|^2 {\Phi}_{ee}(\omega) \right)\right\} \mathrm d\omega $ }\\ \cline{2-3}

&  ARMAX & { $ Argmin \displaystyle\int\nolimits_{-\pi}^{+\pi}  \left\{\left| \frac{\widehat{A} (e^{i\omega})}{\widehat{C} (e^{i\omega})} \right |^2 \left( \left|G(e^{i\omega})- \widehat{G} (e^{i\omega}) \right|^2 {\Phi}_{uu} (\omega)+\right.\right. \newline \cdots  \left.\left. \left|W(e^{i\omega})-\frac{\widehat{C}}{\widehat{A}(e^{i\omega})} \right|^2 {\Phi}_{ee}(\omega)\right)\right\} \mathrm d\omega $ }\\ \cline{2-3}

&  OUTPUT ERROR & { $ Argmin \displaystyle\int\nolimits_{-\pi}^{+\pi}    \left|G(e^{i\omega})- \widehat{G} (e^{i\omega}) \right|^2 {\Phi}_{uu} (\omega)  \mathrm d\omega $ }\\ 
\hline

\multirow{3}*{\rotatebox[origin=c]{90}{\rightline{Closed-loop}}} 
&  ARMAX &
 { $ Argmin \displaystyle\int\nolimits_{-\pi}^{+\pi}  \left\{\left| \frac{\widehat{A} (e^{i\omega})}{\widehat{C} (e^{i\omega})} \right |^2 \left( \left|G(e^{i\omega})- \widehat{G} (e^{i\omega}) \right|^2 \left|S_{yp}(e^{i\omega}) \right|^2 {\Phi}_{r_ur_u} (\omega) +\right.\right. \newline  \left.\left. \cdots  \left|W(e^{i\omega})\frac{\widehat{S}_{yp}(e^{i\omega})}{S_{yp}^{-1}(e^{i\omega})}-\frac{\widehat{C}(e^{i\omega})}{\widehat{A}(e^{i\omega})} \right|^2  {\Phi}_{ee}(\omega) \right)\right\} \mathrm d\omega $ }\\ \cline{2-3}

& OUTPUT ERROR &{ $ Argmin \displaystyle\int\nolimits_{-\pi}^{+\pi}  \left| {\widehat{S}}_{yp} (e^{i\omega})\right |^2  \left|G(e^{i\omega})- \widehat{G} (e^{i\omega}) \right|^2 \left|S_{yp}(e^{i\omega}) \right|^2 {\Phi}_{r_ur_u} (\omega) \mathrm{d} \omega$}\\

\hline

\end{tabular}

\caption{Bias distribution for PEM algorithms in open- and closed-loop}

\label{table3b}

\end{minipage}}
\end{table}
\end{center}

\section{Bias distribution for PLR algorithms with regressor filtering}

\subsection{Equivalent prediction error expressions}

In classical PLR algorithms, such as those presented in section \ref{sect_bias_PLR}, the parameter adaptation algorithm (PAA) is fed with the  \emph{a posteriori} prediction error $\varepsilon (t+1)$ and the regressor $\phi (t)$. In some algorithms, a filtering is operated on the regressor such that:
\begin{equation} 
\phi _{f}(t)=\frac{1}{Q_{f}(q^{-1})}\phi (t)
\label{e_12}
\end{equation}
 and $\phi _{f}(t)$ is used in the PAA instead of $\phi (t)$. $Q_{f}(q^{-1})$ is a ratio of monic polynomials, and we assume that $Q_{f}(q^{-1})$ does not depend on $\widehat{\theta }$. The purpose of this filtering is usually to relax the convergence conditions, see for example (\cite{r4} p. 161). With this filtering the PAA becomes (the predicted output expression is not modified): 
\begin{subequations}
\begin{align}
\widehat{\theta }(t+1)& =\widehat{\theta }(t)+F(t)\phi _{f}(t)\varepsilon
(t+1) \\
F^{-1}(t+1)& =\lambda _{1}F^{-1}(t)+\lambda _{2}\phi _{f}(t)\phi _{f}^{T}(t)
\label{e_12b}
\end{align}

\begin{theorem}

	In the case of regressor filtering in the PAA, corresponding to equations \eqref{e_12}, \eqref{e_12b}, let $Q (q^{ -1} ,\widehat{\theta }) =1 +\widehat{\theta }^{T} \frac{ \partial \phi}{ \partial (q \varepsilon)}$, \textit{, and assume } $\frac{ \partial ^{2}\phi}{ \partial (q\varepsilon)  \partial \widehat{\theta }} =0 .$ The equivalent prediction error $\varepsilon _{E}(t +1) $, as defined in proposition 2, is given by: 

\begin{itemize}
 \item for the equation error model: \\
					$\varepsilon _{E}(t +1) =\frac{Q(q^{ -1} ,\widehat{\theta })}{Q_{f}(q^{ -1} ,\widehat{\theta })}\varepsilon (t +1)+\left(1 -\frac{Q(q^{ -1} ,\widehat{\theta })}{Q_{f}(q^{ -1}	,	  \widehat{\theta })}\right)e(t +1) $
			
\item for the open-loop output error model: \\
			$\varepsilon _{E}(t +1) =\frac{Q(q^{ -1} ,\widehat{\theta })}{Q_{f}(q^{ -1} ,\widehat{\theta })}\varepsilon (t +1)+\left(1 -\frac{Q(q^{ -1} ,\widehat{\theta })}{Q_{f}(q^{ -1}	,	  \widehat{\theta })} \right)v(t +1) $

	\item for the closed-loop output error model: \\
				$\varepsilon _{E}(t +1) =\frac{Q(q^{ -1} ,\widehat{\theta })}{Q_{f}(q^{ -1} ,\widehat{\theta })}\varepsilon (t +1)+\left(1 -\frac{Q(q^{ -1} ,\widehat{\theta })}{Q_{f}(q^{ -1}	,	  \widehat{\theta })}\right)S_{yp}(q^{-1})v(t +1)$

\end{itemize}

\end{theorem}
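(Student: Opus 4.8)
The plan is to establish the claimed expressions for $\varepsilon_E(t+1)$ by verifying the two defining conditions of the equivalent prediction error from Lemma~\ref{prop2}, reproducing the argument of Theorem~\ref{prop4} but now with the filtered regressor $\phi_f(t)=\phi(t)/Q_f(q^{-1})$ in place of $\phi(t)$. The starting point is the stationarity condition of the filtered PAA, which by \eqref{e_12}--\eqref{e_12b} is $\mathbf{E}[\varepsilon(t+1)\phi_f(t)]=0$. Since the relation $Q(q^{-1},\widehat{\theta})\,\frac{\partial\varepsilon(t+1)}{\partial\widehat{\theta}}=-\phi(t)$ was established just before Theorem~\ref{prop4}, and since $Q_f$ is assumed independent of $\widehat{\theta}$, one has $\phi_f(t)=-\frac{Q}{Q_f}\frac{\partial\varepsilon(t+1)}{\partial\widehat{\theta}}$; thus $Q_f$ enters merely as a $\widehat{\theta}$-independent (transfer-function) factor that commutes with $\partial/\partial\widehat{\theta}$. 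This is precisely what makes it plausible that the entire computation of Theorem~\ref{prop4} carries over with $Q$ replaced by $Q/Q_f$.

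First I would check Condition 1). When the system is in the model set and $\widehat{\theta}=\theta$, one has $\varepsilon(t+1,\theta)=e(t+1)$ for the equation error model (respectively $v(t+1)$ and $S_{yp}(q^{-1})v(t+1)$ for the open- and closed-loop output error models). Substituting into the proposed formula gives $\varepsilon_E=\frac{Q}{Q_f}e+(1-\frac{Q}{Q_f})e=e$ (respectively $v$ and $S_{yp}v$), so Condition 1) holds immediately in all three cases.

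Next I would verify Condition 2), namely $\frac{\partial\phi_E(t)}{\partial\widehat{\theta}}=0$ with $\phi_E=-\frac{\partial\varepsilon_E(t+1)}{\partial\widehat{\theta}}$. Differentiating the proposed expression and using that $e(t+1)$ (respectively $v$, $S_{yp}v$) and $Q_f$ do not depend on $\widehat{\theta}$ gives $\frac{\partial\varepsilon_E}{\partial\widehat{\theta}}=\frac{Q}{Q_f}\frac{\partial\varepsilon}{\partial\widehat{\theta}}+\frac{1}{Q_f}\frac{\partial Q}{\partial\widehat{\theta}}(\varepsilon-e)$. The first term equals $-\phi_f$ by $Q\,\frac{\partial\varepsilon}{\partial\widehat{\theta}}=-\phi$; and, exactly as in Theorem~\ref{prop4}, the hypothesis $\frac{\partial^2\phi}{\partial(q\varepsilon)\partial\widehat{\theta}}=0$ yields $\frac{\partial Q}{\partial\widehat{\theta}}=\frac{\partial\phi}{\partial(q\varepsilon)}$. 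Hence $\phi_E=\phi_f-\frac{1}{Q_f}\frac{\partial\phi}{\partial(q\varepsilon)}(\varepsilon-e)$, and a second differentiation with respect to $\widehat{\theta}$ makes the two contributions cancel, as in Theorem~\ref{prop4}, giving $\frac{\partial\phi_E}{\partial\widehat{\theta}}=0$. With both conditions verified, Lemma~\ref{prop2} identifies this $\varepsilon_E$ as the equivalent prediction error whose variance is minimized by the filtered PAA.

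The main obstacle I expect is this second differentiation in Condition 2): as in the unfiltered case, $\phi$ depends on $\widehat{\theta}$ both explicitly and through the lagged prediction errors $q\varepsilon$, so the cancellation $\frac{\partial\phi_E}{\partial\widehat{\theta}}=0$ requires care with the chain rule, and one must confirm that division by $Q_f$ does not reintroduce any $\widehat{\theta}$-dependence. It is exactly the standing assumptions -- that $Q_f$ is $\widehat{\theta}$-independent and that $\frac{\partial^2\phi}{\partial(q\varepsilon)\partial\widehat{\theta}}=0$ -- that let the Theorem~\ref{prop4} computation be transcribed with $Q\mapsto Q/Q_f$, which is why the filtered equivalent error is obtained from the unfiltered one by precisely that substitution.
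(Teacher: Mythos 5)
Your proof is correct and follows essentially the same route as the paper's: verify Condition 1) of Lemma \ref{prop2} by direct substitution ($\varepsilon=e$, $v$, or $S_{yp}v$ when the system is in the model set and $\widehat{\theta}=\theta$), then verify Condition 2) by computing $\frac{\partial\varepsilon_E}{\partial\widehat{\theta}}=-\phi_f+\frac{1}{Q_f}\frac{\partial Q}{\partial\widehat{\theta}}(\varepsilon-e)$, invoking $\frac{\partial^2\phi}{\partial(q\varepsilon)\partial\widehat{\theta}}=0$ and the $\widehat{\theta}$-independence of $Q_f$ to get the cancellation $\frac{\partial\phi_E}{\partial\widehat{\theta}}=0$. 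The only cosmetic difference is that the paper absorbs $1/Q_f$ into the identity $\frac{1}{Q_f}\frac{\partial Q}{\partial\widehat{\theta}}=\frac{\partial\phi_f}{\partial(q\varepsilon)}$ while you keep the factor explicit, which is the same computation.
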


\begin{proof}
We give the proof for the open-loop equation error model; it is very similar
to the proof of theorem \ref{theorem1}. Once again we check that if the
system is in the model set and $\theta =\widehat{\theta }$ one has $%
\varepsilon _{E}(t+1)=e(t+1),$ since in this case $\varepsilon (t+1)=e(t+1)$%
. One must verify that $\phi _{E}(t)=\frac{-\partial \varepsilon _{E}(t+1)}{%
\partial \widehat{\theta }}$ is not a function of $\widehat{\theta }$, i.e. $%
\frac{\partial \phi _{E}(t)}{\partial \widehat{\theta }}=0.$ One has 
\begin{eqnarray*}
\frac{\partial \varepsilon _{E}(t+1)}{\partial \widehat{\theta }} &=&\frac{%
Q(q^{-1})}{Q_{f}(q^{-1})}\frac{\partial \varepsilon (t+1)}{\partial \widehat{%
\theta }}+\frac{1}{Q_{f}(q^{-1})}\frac{\partial Q}{\partial \widehat{\theta }%
}\left( \varepsilon (t+1)-e(t+1)\right)  \\
&=&-\phi _{f}(t)+\frac{1}{Q_{f}(q^{-1})}\frac{\partial Q}{\partial \widehat{%
\theta }}(\varepsilon (t+1)-e(t+1)).
\end{eqnarray*}
\newline
But $\frac{1}{Q_{f}(q^{-1})}\frac{\partial Q}{\partial \widehat{\theta }}=%
\frac{\partial \phi _{f}}{\partial (q\varepsilon )}.$ Therefore, 
\begin{equation*}
\frac{\partial \varepsilon _{E}(t+1)}{\partial \widehat{\theta }}=-\phi
_{f}(t)+\frac{\partial \phi _{f}}{\partial (q\varepsilon )}(\varepsilon
(t+1)-e(t+1)).
\end{equation*}
With $\phi _{E}(t)=-\frac{\partial \varepsilon _{E}(t+1)}{\partial \widehat{%
\theta }}$, one obtains again $\frac{\partial \phi _{E}(t)}{\partial 
\widehat{\theta }}=-\frac{\partial \phi _{f}(t)}{\partial \widehat{\theta }}+%
\frac{\partial \phi _{f}(t)}{\partial \widehat{\theta }}=0,$ assuming that $%
\frac{\partial ^{2}\phi }{\partial (q\varepsilon )\partial \widehat{\theta }}%
=0$, i.e. $\frac{\partial ^{2}\phi _{f}}{\partial (q\varepsilon )\partial 
\widehat{\theta }}=0$ (as $Q_{f}(q^{-1})$ does not depends on $\widehat{%
\theta }$). The generalization to open- and closed-loop output error schemes
is immediate by substituting $e(t)$ by $v(t)$ and $S_{yp}(q^{-1})v(t)$,
respectively.
\end{proof}

\noindent This regressor filtering has been applied in the litterature to the output
error algorithms. In open loop, if $Q_{f}(q^{-1})$ is a monic
polynomial with $Q_{f}(q^{-1})=A_{o}(q^{-1})$, the algorithm corresponds to
the so-called F-OLOE \cite{r4}, p.173 of Landau \emph{et al.} In closed-loop
operations, if we use $Q_{f}(q^{-1})=\frac{P_{o}(q^{-1})}{S(q^{-1})}$, we
get Landau's F-CLOE algorithm where $P_{o}$ an estimation of the closed loop
characteristic polynomial. The equivalent prediction error is provided in
Table \ref{table4}, and Table \ref{table5} gives the associated bias
distributions.

\begin{center}
\begin{table}[h]
\resizebox{1\textwidth}{!}{\begin{minipage}{\textwidth}
\renewcommand*{\arraystretch}{1}

\begin{tabular}{|>{\centering\scriptsize\arraybackslash}p{25pt}|>{\centering\scriptsize\arraybackslash}p{70pt}|>{\centering\scriptsize\arraybackslash}p{40pt}|>{\centering\scriptsize\arraybackslash}p{173pt}|}

\hline

 \multicolumn{2}  {|>{\centering\scriptsize}p{85pt}|}{NOISE MODEL} & {\scriptsize $\frac{{Q}{ (\widehat{\theta })}}{Q_f (\widehat{\theta })}$} & {\scriptsize $\varepsilon _{E} (t)$}  \\
\hline  
Open-loop & OUTPUT ERROR & $\frac{\widehat{A}}{A_0}$ & $\frac{\widehat{A}}{A_{0}} (G -\widehat{G}) u (t) +v (t)$  \\
\hline
Closed-loop & OUTPUT ERROR & $\frac{\widehat{P}}{P_0}$ & $\frac{\widehat{A}S}{P_{0}} (G -\widehat{G}) S_{yp}r_{u} (t) +S_{yp}v(t)$\\
\hline

\end{tabular}

\caption{Equivalent prediction error in the case of regressor filtering}

\label{table4}

\end{minipage}}
\end{table}

\begin{table}[]
\resizebox{1\textwidth}{!}{\begin{minipage}{\textwidth}
\renewcommand*{\arraystretch}{1}

\begin{tabular}{|>{\centering\scriptsize\arraybackslash}p{25pt}|>{\centering\scriptsize\arraybackslash}p{70pt}|>{\centering\scriptsize\arraybackslash}p{225pt}|}

\hline

 \multicolumn{2}  {|>{\centering\scriptsize}p{85pt}|}{NOISE MODEL} & {${\widehat{\theta}}_{PLR}^*$}  \\
\hline  
Open-loop & OUTPUT ERROR & { $ Argmin \displaystyle\int\nolimits_{-\pi}^{+\pi}  \left| \frac{\widehat{A} (e^{i\omega})}{{A_0} (e^{i\omega})} \right |^2  \left|G(e^{i\omega})- \widehat{G} (e^{i\omega}) \right|^2 {\Phi}_{uu} (\omega)  \mathrm d\omega $ }\\ 
\hline
Closed-loop & OUTPUT ERROR & { $ Argmin \displaystyle\int\nolimits_{-\pi}^{+\pi}  \left| \frac{\widehat{A} (e^{i\omega})S(e^{i\omega})}{{P}_0 (e^{i\omega})} \right |^2  \left|G(e^{i\omega})- \widehat{G} (e^{i\omega}) \right|^2 \times \newline\cdots\left|S_{yp}(e^{i\omega}) \right|^2{\Phi}_{r_ur_u} (\omega)  \mathrm d\omega $ }\\ 
\hline

\end{tabular}

\caption{Bias distribution in case of regressor filtering in the parameter adaptation algorithm}

\label{table5}

\end{minipage}}
\end{table}
\end{center}

\noindent In \cite{r4} p.175 and p. 300,  adaptive versions of F-OLOE and F-CLOE algorithms are
proposed, consisting in imposing respectively: $A_{0}(q^{-1},t)=\widehat{A}%
(q^{-1},t)$ and $P_{0}(q^{-1},t)=\widehat{A}(q^{-1},t)S(q^{-1})+\widehat{B}%
(q^{-1},t)R(q^{-1})$. We notice that, according to Table \ref{table5}, the
associated bias distribution corresponds to that of PEM algorithms in Table %
\ref{table3b}. This confirms that the so-called AF-OLOE and AF-CLOE algorithms are recursive PEM algorithms, a point which was already suggested in \cite{r4}, p. 308.

\section{Simulations}

In order to show the relevance of the above analysis, we propose a set of
simulations based on the same example as the one presented in (\cite{r3},
section 6). It concerns the closed-loop identification of a system $G(q^{-1})
$: 
\end{subequations}
\begin{equation}
G(q^{-1})=\frac{q^{-1}+0.5q^{-2}}{(1-1.5q^{-1}+0.7q^{-2})(1-q^{-1})}  \notag
\end{equation}

\noindent fed back by a controller $K(q^{-1})$: 
\begin{equation}
K(q^{-1})=\frac{0.8659-1.2763q^{-1}+0.5204q^{-2}}{(1-q^{-1})(1+0.3717q^{-1})}
\notag
\end{equation}

\noindent The excitation is a 9 registers PRBS introduced additively on the
system input. The internal PRBS period being equal to the sampling period,
the spectral power density can be considered constant over the frequency
range. The model order is chosen equal to 2, as in \cite{r3}, thus a bias is
necessarily present on the identified model. We compare the identification
results of two models:

\begin{itemize}
\item the recursive PLR CLOE algorithm without prediction error filtering,
method as employed in \cite{r3},

\item a non recursive PEM CLOE method without prediction error filtering, by
computing the sensitivity function of the prediction error with respect to
the estimated parameters, and performing a non-linear optimization.
\end{itemize}

\noindent We recall the theoretical bias distribution of CLOE scheme without
regressor filtering derived in this paper for PLR methods (see the last row
of Table 3, respectively):

\begin{equation}
\widehat{\theta }_{PLR}^{\ast }=Arg\min \int\nolimits_{-\pi }^{+\pi
}\left\vert \widehat{A}(e^{i\omega })\right\vert ^{2}\left\vert G(e^{i\omega
})-\widehat{G}(e^{i\omega })\right\vert ^{2}\left\vert S_{yp}(e^{i\omega
})\right\vert ^{2}\Phi _{r_{u}r_{u}}(\omega )\mathrm{d}\omega   \label{eq901}
\end{equation}

\noindent and for PEM  (last row of Table 4):

\begin{equation}
\widehat{\theta }_{PEM}^{\ast }=Arg\min \int\nolimits_{-\pi }^{+\pi
}\left\vert \frac{\widehat{A}(e^{i\omega })\widehat{S}(e^{i\omega })}{%
\widehat{P}(e^{i\omega })}\right\vert ^{2}\left\vert G(e^{i\omega })-%
\widehat{G}(e^{i\omega })\right\vert ^{2}\left\vert S_{yp}(e^{i\omega
})\right\vert ^{2}\Phi _{r_{u}r_{u}}(\omega )\mathrm{d}\omega   \label{eq902}
\end{equation}

\noindent where $\widehat{P}(q^{-1})=\widehat{A}(q^{-1})S(q^{-1})+\widehat{B}%
(q^{-1})R(q^{-1})$. \newline
\noindent We notice that these two expressions differ by the
introduction of the term $\left|\frac{ {S} (e^{i\omega})}{\widehat{P}
(e^{i\omega})}\right |^2$  in the PEM scheme. Therefore we can expect that
the PEM model fit must be better than the PLR model fit at frequencies where
the filter $\frac{S(q^{-1})}{\widehat{P}(q^{-1})}$ magnitude is the largest,
and worse at frequencies where this magnitude is small. This is exactly what is
observed. Figure \ref{fig1} displays the Bode diagram of the real system compared to those of the two
identified models. Figure \ref{fig2} top graph shows the
Vinnicombe gap (see \cite{r12}) from the real system to these two
identified models, whereas the bottom graph displays the magnitude of the
filter $\frac{S(q^{-1}}{\widehat{P}(q^{-1}}$. PEM effectively
provides a better adjustment around normalized frequencies equal to 0.15,
corresponding to the largest magnitude of the said filter. Concomitantly, we
notice a significant misfit at high frequency (compared to the PLR method),
and another misfit at very low frequency. The misfit at low frequency is not
as important as at high frequency. It can be explained by the influence of
the second weighting function $ {\left| S_{yp}(e^{i\omega}) \right |}^2 $
entering in the expressions \eqref{eq901}, \eqref{eq902}, that has a decreasing
magnitude towards zero, because of an integrator in the controller.

\begin{figure}[H]
\begin{center}
\includegraphics[ width=3.2in, keepaspectratio]{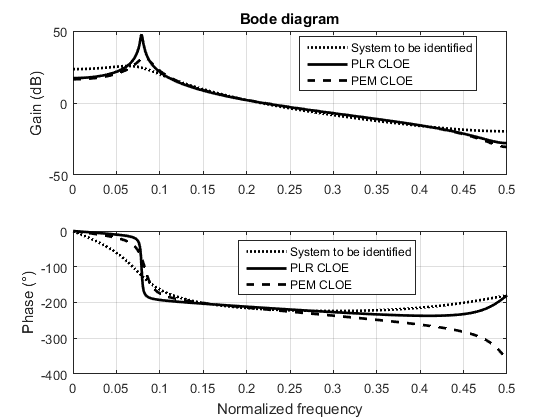}
\end{center}
\caption{ PLR CLOE and PEM CLOE models compared to the real system}
\label{fig1}
\end{figure}

\begin{figure}[H]
\begin{center}
\includegraphics[ width=3.2in, keepaspectratio]{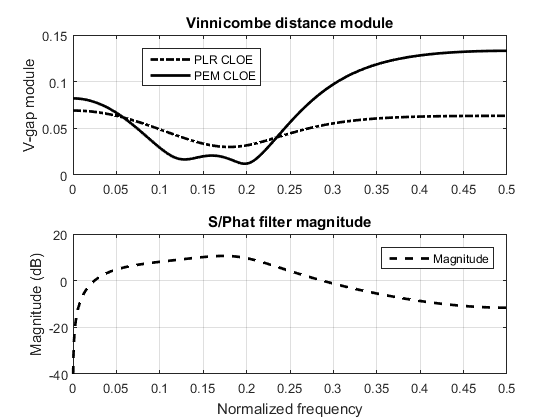}
\end{center}
\caption{ Vinnicombe gap from PLR CLOE and PEM CLOE models to the real
system, and filter $\frac{S(q^{-1})}{\protect\widehat{P}(q^{-1})}$
magnitude. }
\label{fig2}
\end{figure}

\section{Conclusion}

We have demonstrated that the bias distribution expression in the frequency
domain of pseudo-linear regression identification algorithms differs from
prediction error method schemes, both in open- and closed-loop operations.
In particular, we have shown that pseudo-linear regression algorithms do not
minimize the prediction error variance. That led us to introduce the concept
of equivalent prediction error whose variance is effectively minimized by
the pseudo-linear regression schemes. We have derived a list of bias
distributions for algorithms belonging to this class. These expressions
show that, compared to prediction error methods, most of the pseudo-linear
regression identification structures --without regressor filtering-- provide a
better model fit in high frequency. This filtering in the parameter
adaptation algorithm induces a change in the bias distribution, therefore it
can be used not only to relax the convergence conditions of algorithms, but
also to enhance or depress the model accuracy over a given frequency range.

\appendix
\section*{Appendix}
\label{appendix_A} Extended Hilbert spaces are Fr{\'e}chet spaces, not
Banach spaces, so that the classical differential calculus does not apply in
this context. Some hints are given here on the differential calculus in
extended Hilbert spaces. For the sake of simplicity, the Hilbert space
considered is the space $l_2$ of square-summable sequences which are zero
for $t<0$. The truncation operator $P_{t}$, determined for all $t\geq0$ and
any $x\in l_2$ by $P_{t}x(\tau)=x(\tau)$ if $t\leq \tau$, and $P_{t}x(\tau)=0$
otherwise, is an orthonormal projection of $l_2$, and a resolution of
identity. An operator $\mathcal{G}: l_{2e}\rightarrow l_ {2e} $ is causal if
and only if $\mathcal{P}_{t}\circ \mathcal{G} \circ \mathcal{P}_{t}=\mathcal{%
P}_{t} \circ \mathcal{G} $ for all $t\in\mathbb{Z}_{+} $ (see \cite{r9},
section 1.B). 
As easily shown, a causal operator is continuous at $u^{0}\in l_{2e}$ (in
the usual sense, when the domain and the codomain of $G$ are both endowed
with their canonical locally convex topology) if, and only if $P_{t}\circ
G\circ P_{t}$, viewed as an operator from $l_{2}$ to $l_{2},$ is continuous
at $u_{t}^{0}$ for all $t\in \mathbb{Z}_{+}.$ \ Every causal linear operator
is continuous, as shown by the proof of (\cite{r10}, Section 7.2, Thm. 22)
(mutatis mutandis).

A causal operator is called differentiable at $u^{0}_t \in l_{2e} $ if $%
\mathcal{P}_{t}\circ \mathcal{G} \circ \mathcal{P}_{t} $ is differentiable
at $u^{0}_t $, for all $t \in \mathbb{Z}_{+} $. Its differential is then
defined to be the mapping $\mathcal{DG}(u^{0}) \in \mathcal{L}%
_{c}(l_{2e},l_{2e}) $, where $\mathcal{L}_{c}(l_{2e},l_{2e}) $ is the space
of causal linear mappings from $l_{2e} $ to $l_{2e}$, uniquely determined by
the condition: $\mathcal{D}(\mathcal{P}_{t}\circ \mathcal{G} \circ \mathcal{P%
}_{t})=\mathcal{P}_{t}\circ \mathcal{DG} \circ \mathcal{P}_{t}$ for all $t
\in \mathbb{Z}_{+} $. As a result, if $\mathcal{G} $ is causal and linear,
it is differentiable at any point $u^{0}\in l_{2e}$, and $\mathcal{DG}(u^0)=%
\mathcal{G} $. 
Let $x $ and $w $ be two causal signals belonging to $l_{2e}$, and $\mathcal{%
G}(q^{-1}) $ be a causal operator. If one has : $w\text{~}=\text{~}\mathcal{G}%
(q^{-1})x$, then the following relation holds for any $x_{0}\in l_{2e}$: 
\begin{equation*}
\frac{\partial{w}}{\partial{x}}\left(x_{0}\right)=\mathcal{G}(q^{-1})
\end{equation*}


\end{document}